\newtheorem{example}{Example}
\newtheorem{theorem}{Theorem}
\newtheorem{definition}{Definition}
\newtheorem{problem}{Problem Defnition}
\newtheorem{lemma}{Lemma}
\newtheorem*{paragraph*}{Paragraph}
\newtheorem{corollary}{Corollary}
\title{Effective Index Construction Algorithm for Optimal $(k,\eta)$-cores Computation}
\author{
Shengli Sun$^1$
\and
Peng Xu$^1$\and
Guanming Jiang$^1$\and
Philip S.Yu$^2$\and
Yi Li$^3$\\
\affiliations
$^1$School of Software and Microelectronics, Peking University, China\\
$^2$Department of Computer Science, University of Illinois at Chicago, Chicago\\
$^3$Mckelvey Engineering School, Washington University, St. Louis\\
\emails
slsun@ss.pku.edu.cn,
\{peng.xu, jiangguanming\}@stu.pku.edu.cn,
psyu@uic.edu, lyi1@wustl.edu
}
\begin{document}

\maketitle

\begin{abstract}
    Computing $(k,\eta)$-cores from uncertain graphs is a fundamental problem in uncertain graph analysis. UCF-Index is the state-of-the-art resolution to support $(k,\eta)$-core queries, allowing the $(k,\eta)$-core for any combination of $k$ and $\eta$ to be computed in an optimal time. However, this index constructed by current algorithm is usually incorrect. During decomposition, the key is to obtain the $k$-probabilities of its neighbors when the vertex with minimum $k$-probability is deleted. Current method uses recursive floating-point division to update it, which can lead to serious errors. We propose a correct and efficient index construction algorithm to address this issue. Firstly, we propose tight bounds on the $k$-probabilities of the vertices that need to be updated, and the accurate $k$-probabilities are recalculated in an on-demand manner. Secondly, vertices partitioning and progressive refinement strategy is devised to search the vertex with the minimum $k$-probability, thereby reducing initialization overhead for each $k$ and avoiding unnecessary recalculations. Finally, extensive experiments demonstrate the efficiency and scalability of our approach.
\end{abstract}

\section{Introduction}
On uncertain graphs, $(k,\eta)$-core mining is a crucial method for complex network analysis in real-world environments that are often filled with noise \cite{aggarwal2010managing} and measurement errors \cite{adar2007managing}. Given an uncertain graph $\mathcal{G}$, a $(k,\eta)$-core is a maximal subgraph where each vertex has a degree at least \textit{k} with probability no less than $\eta$, \text{where} $\eta \in [0,1]$. The $(k,\eta)$-core is widely used for graph analysis tasks, such as community search \cite{luo2021efficient,miao2022reliable}, fraud detection \cite{wu2021containment}, influence maximization \cite{bonchi2014core,dai2021core,calio2020cores}, and task-driven team formation \cite{esfahani2019efficient,yang2019index}, etc. 

UCF-Index is the state-of-the-art resolution to support efficient computation of $(k,\eta)$-cores for any possible values of $k$ and $\eta$ with an optimal time complexity that is linear to the number of vertices in the result set  \cite{yang2019index}. The $\eta$-threshold of some vertex $u$ is the maximum value of $\eta$ such that a $(k,\eta)$-core containing $u$ exists. UCF-Index efficiently supports online search of $(k,\eta)$-cores by maintaining $\eta$-threshold of each vertex for each $k$. However, the UCF-Index constructed by current algorithm is usually incorrect.

The $k$-probability of vertex $u$ on $\mathcal{G}$ is the probability that $u$ has a degree no less than $k$. During $(k,\eta)$-core decomposition, the key is to obtain the $k$-probability of its neighbors when a vertex is deleted. Yang et al. iteratively removed the vertex with minimum $k$-probability to calculate $\eta$-threshold, and sought to update its neighbors' $k$-probability by a recursive floating-point number division operation. Consequently, the error of the neighbor’s $\eta$-threshold so obtained can be up to $O(\varepsilon /{{(1-{{p}_{e}})}^{k}})$, where $\varepsilon$ is the error of the floating-point number representation of a computer and ${p}_{e}$ is the probability of the edge to be deleted \cite{dai2021core}. As an illustrative example shown in Figure \ref{fig:error_UCF}, suppose that the floating-point number precision of a computer is $10^{-3}$ and $k=2$, $\eta=0.107$. The correct (2,0.107)-core is $\{v_5,v_6,v_7\}$ as shown in Figure\ref{fig:right_example}, while the (2,0.107)-core obtained by current UCF-Index construction algorithm produces an incorrect result $\{v_1,v_2,v_3,v_5,v_6,v_7\}$ as shown in Figure\ref{fig:error_example}.


To accurately compute $(k,\eta)$-core with a specific parameter set $k$ and $\eta$, Dai et al. proposed to recalculate the $k$-probability using dynamic programming (DP) to avoid division calculation errors \cite{dai2021core}. Nevertheless, for UCF-Index, which can quickly support $(k,\eta)$-core queries with any combinations of parameters $k$ and $\eta$, the question of how to correctly and efficiently build it remains unresolved.


\begin{figure}[tbp]
\centering
\subfigure[Accurate results]{
    \label{fig:right_example} 
    \includegraphics[width=0.21\textwidth]{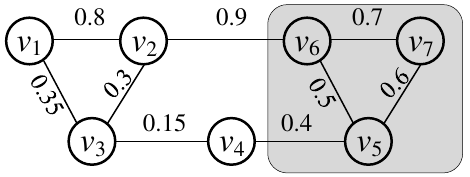}
}
\hfill
\subfigure[Inaccurate results]{
    \label{fig:error_example} 
    \includegraphics[width=0.22\textwidth]{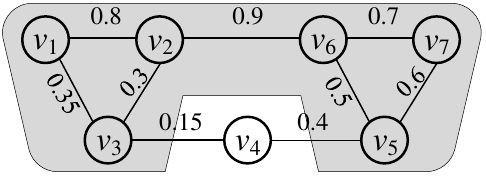}
}
\vspace{-1em}
\caption{The $(k,\eta)$-core of $\mathcal{G}$ for $k=2$ and $\eta=0.107$}
\vspace{-1em}
\label{fig:error_UCF}
\end{figure}



In this paper, we propose an efficient algorithm to build a correct UCF-Index. Our basic strategy is to recalculate the neighbors' $k$-probability, hence correctness of the index can be guaranteed. Our major idea is to minimize the number of recalculations based on a bounding and lazy refreshing strategy, and to limit the search space for vertex with minimum $k$-probability based on vertices partitioning and a progressive refinement strategy. The principal contributions of this paper are summarized as follows:
\begin{itemize}
    \item We propose tight bounds to estimate the $k$-probability of the vertices that need to be updated, and propose a lazy refreshing strategy where the recalculation of $k$-probability is carried out on-demand.
    \item We partition vertices into different layers based on the property of $\eta$-threshold. The search for the vertex with the minimum $k$-probability is conducted layer by layer, thereby reducing the overhead of $k$-probability initialization in each iteration and avoiding unnecessary recalculations.
    \item Extensive experiments demonstrate the efficiency and scalability of our approach. Our proposed algorithm \textit{OptiUCF} outperforms the baseline algorithm by one to two orders of magnitude.
\end{itemize}

\section{Preliminary}
\subsection{Core Decomposition on Uncertain Graphs}
Given an uncertain graph $\mathcal{G}(V,E,p)$, where $V$ denotes the vertex set, $E$ denotes the edge set, and $p$ is a function that assigns each edge $e \in E$ to a probability value in [0,1]. We denote the probability of an edge $e$ as $p_e$. Let $N_u(\mathcal{G})$ be the set of neighbors of $u$ in $\mathcal{G}$ and $deg(u,\mathcal{G})=|N_u(\mathcal{G})|$ be the degree of $u$. An induced graph of $\mathcal{G}$ based on a vertex set $C\subseteq V$ is defined as $\mathcal{G}[C]=(V_C,E_C,p)$, where $V_C=C$ and $E_C=\{(u,v)|u,v\in C \land (u,v)\in E\}$. In line with existing works \cite{potamias2010k,jin2011distance,li2014efficient}, we assume that the existing probability of each edge is independent in uncertain graphs. Under this assumption, the well-known possible world semantics can be applied to the analysis of uncertain graphs \cite{bonchi2014core}. Thus, there are $2^{|E|}$ possible graph instances in an uncertain graph $\mathcal{G}$. The probability of a possible graph instance $G$ of $\mathcal{G}$, denoted as $\Pr (G)$, can be calculated as follows:
\begin{align}
    \Pr (G)=\prod\limits_{e\in {{E}_{G}}}{{{p}_{e}}}\prod\limits_{e\in E\setminus{{E}_{G}}}{(1-{{p}_{e}})}
\end{align}%

Let $\mathcal{G}_{u}^{\ge k}$  denote the set of all possible instances of $\mathcal{G}$ where $u$ has a degree of at least $k$. We have the following equation \cite{bonchi2014core}:

\begin{equation}
      \Pr (deg (u,\mathcal{G})\ge k)=\sum\limits_{G\in \mathcal{G}_{u}^{\ge k}}{\Pr (G)}
\label{equation:G}
\end{equation}



The preconception of $(k,\eta)$-core is $k$-core, a well-known cohesive subgraph model on the deterministic graph. $K$-core is defined as the maximal subgraph of graph $G$ such that all vertices in the subgraph have a degree of at least $k$.  Based on the possible world semantics, the concept of $(k,\eta)$-core is defined as follows.

\begin{definition}[$(k,\eta)$-core]
    Given an uncertain graph $\mathcal{G}$ and a probability threshold $\eta \in [0,1]$, the $(k,\eta)$-core is a maximal connected induced subgraph $\mathcal{G}'=(V',E',p)$ in which the probability that each vertex has a degree no less than $k$ is no less than $\eta$, i.e., $\forall{u \in V'}, \Pr(deg(u,\mathcal{G}') \ge k) \ge \eta$.
\label{definition: kn_core}
\end{definition}

\begin{example}
    Consider the uncertain graph $\mathcal{G}$ in Figure \ref{fig:3n_core}. We denote the induced subgraph $\mathcal{G}[\{v_1,v_2,v_3,v_4\}]$ as $\mathcal{G'}$. Given an integer $k=2$, we have $\Pr(deg(v_1,\mathcal{G'}) \ge 2) = 0.80$, $\Pr(deg(v_2,\mathcal{G'}) \ge 2) = \Pr(deg(v_4,\mathcal{G'}) \ge 2) = 0.85$, 
 and $\Pr(deg(v_3,\mathcal{G'}) \ge 2) = 0.98804$. $\mathcal{G'}$ is maximal. Therefore, $\mathcal{G'}$ is a $(2,0.80)$-core.
\end{example}




\subsection{UCF-Index}
To support the online search for $(k,\eta)$-cores on large-scale uncertain graphs, Yang et al. proposed a forest-based index UCF-Index, which is composed of $k$ $\eta$-trees. Before introducing the UCF-Index, we first provide the definition of the $\eta$-threshold.

\begin{definition}[$\eta$-threshold]
    Given an uncertain graph $\mathcal{G}$ and an integer $k$, the $\eta$-threshold of a vertex $u$, denoted by $\eta$-$thres_k(u)$, is the maximum value of $\eta$ such that $u$ can be contained in a $(k,\eta)$-core.
\label{definition:eta}
\end{definition}

Based on Definition \ref{definition:eta}, a vertex $u$ is in the $(k,\eta)$-core if and only if $\eta$-$thres_k(u) \ge \eta$ \cite{yang2019index}. 

\begin{figure}[tbp]
\centering
\subfigure[The $(k,\eta)$-core of $\mathcal{G}$ for $k=2$ and $\eta=0.8$]{
    \label{fig:3n_core} 
    \includegraphics[width=0.22\textwidth]{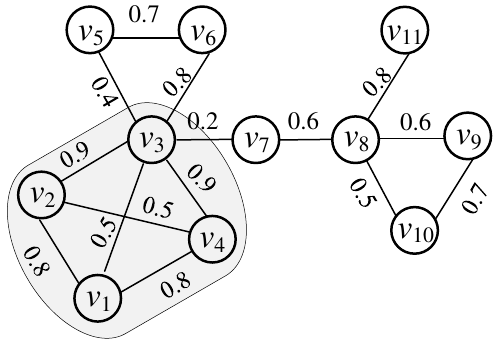}
}
\hfill
\subfigure[The $\eta$-$\text{tree}_2$ of $\mathcal{G}$]{
    \label{fig:2_tree} 
    \includegraphics[width=0.22\textwidth]{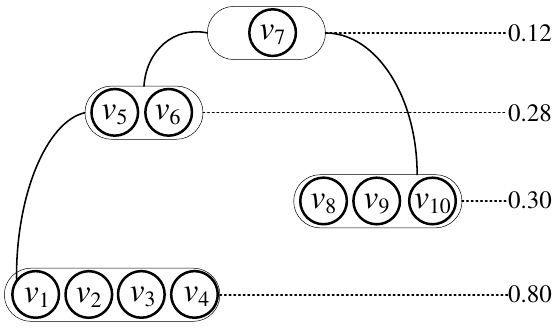}
}
\vspace{-1em}
\caption{The $(k,\eta)$-core and $\eta$-$\text{tree}_2$ of $\mathcal{G}$}
\vspace{-1em}
\label{fig:core_Sa}
\end{figure}

UCF-Index uses a tree structure to maintain the $\eta$-threshold of each vertex for each $k$, denoted by $\eta$-tree$_k$. Figure \ref{fig:2_tree} shows the $\eta$-tree of $\mathcal{G}$ for $k=2$. The $\eta$-threshold of each vertex is listed on the right. Given $\eta=0.28$, we can first find the node $R=\{v_5,v_6\}$ such that the $\eta$-thresholds of all vertices in $R$ are not less than $\eta$. Then, the induced subgraph of the set of all vertices $\{v_1,v_2,v_3,$$v_4,v_5,v_6\}$ in the subtree rooted at $R$ is a $(2,0.28)$-core.

Next, we introduce the current method for constructing the UCF-Index based on $k$-probability.

\begin{definition}[$k$-probability]
    Given an uncertain graph $\mathcal{G}$ and an integer $k$, the $k$-probability of a vertex $u$ in  $\mathcal{G}$, denoted by $k$-$prob(u,\mathcal{G})$, is the probability that $\Pr(deg(u,\mathcal{G}) \ge k)$.
\end{definition}




$K$-probability can be computed by a dynamic programming (DP) algorithm based on the following equation:
\begin{equation}
    \Pr (deg (u,\mathcal{G})\ge k)=1-\sum\limits_{i=0}^{k-1}{\Pr (deg (u,\mathcal{G})=i)}
\label{equation:computeG}
\end{equation}
Let $E(u,\mathcal{G})=\{e_1,e_2,...,e_{deg(u,\mathcal{G})}\}$ be the set of edges incident to $u$ in $\mathcal{G}$ and $E_h(u,\mathcal{G})=\{e_1,e_2,...,e_h\}$ be a subset of $E(u,\mathcal{G})$, where $h\in [1,deg (u,\mathcal{G})]$. Let $\mathcal{G}_h=(V,E\ \backslash\ (E(u,\mathcal{G})\ \backslash \ E_h(u,\mathcal{G})),p)$ be a subgraph of $\mathcal{G}$ that excludes the edges in $E(u,\mathcal{G})\backslash E_h(u,\mathcal{G})$. We use ${{X}_{u}}(h,i)=\Pr (deg (u,\mathcal{G}_h)=i)$ to represent the probability of a vertex $u$ that has a degree $i$ in $\mathcal{G}_h$, where $i\in [0,h]$. Therefore, the DP equation is as follows \cite{bonchi2014core}:
\begin{equation}
    {{X}_{u}}(h,i)={{p}_{{{e}_{h}}}}{{X}_{u}}(h-1,i-1)+(1-{{p}_{{{e}_{h}}}}){{X}_{u}}(h-1,i)
\label{con:dp}
\end{equation}

The initial states of the DP equation are as follows:
\begin{equation}
\left\{
\begin{array}{ll}
  X_u(0, 0) = 1, \\ 
  X_u(h, -1) = 0, \quad h \in [0, deg(u, \mathcal{G})] \\ 
  X_u(h, j) = 0, \quad h \in [0, deg(u, \mathcal{G})], \, j \in [h+1, i] \\ 
\end{array}
\right.
\label{con:dpinit}
\end{equation}

In peeling stage, we use $\mathcal{G}_{\lnot e}=(V,E\backslash e,p)$ to represent the subgraph of $\mathcal{G}$ after deleting $e$. When an edge $e$ is removed from $E(u,\mathcal{G})$, the $k$-probability of $u$ can be updated by: 
\begin{equation}
\resizebox{0.91\linewidth}{!}{%
$
\begin{aligned}
    & \Pr (deg (u,\mathcal{G}_{\lnot e})=i) \\
    & = \frac{\Pr (deg (u,\mathcal{G})=i) - p_{e} \Pr (deg (u,\mathcal{G}_{\lnot e})=i-1)}{1-p_{e}}
\end{aligned} $
}
\label{con:update}
\end{equation}
where $i\in [1,k]$, and if $i=0$, there is $\Pr (deg (u,\mathcal{G}_{\lnot e})=0)=\frac{1}{1-p_e}\cdot \Pr(deg(u,\mathcal{G})=0)$.


However, as analyzed by \cite{dai2021core}, the floating-point division operation of Equation (\ref{con:update}) will cause serious errors. During the process of UCF-Index construction, the vertex with minimum $k$-probability is iteratively removed and its neighbors' $k$-probability is updated accordingly by Equation (\ref{con:update}). As a result, the UCF-Index constructed in this way is incorrect.


\begin{problem}
     Given an uncertain graph $\mathcal{G}$, we aim to efficiently and accurately construct the UCF-Index to support the computation of $(k,\eta)$-cores in $\mathcal{G}$ for all possible integer $k$ and $\eta \in [0,1]$.
\end{problem}

\section{Baseline Algorithm to Build a Correct UCF-Index}
\label{subsec:bc}

Our basic strategy is to use Equation (\ref{con:dp}) to invoke DP to recompute the $k$-probabilities of the affected vertices, thereby avoiding division operations and ensuring the correctness of the $k$-probabilities. Each invocation of DP to recompute the $k$-probability is referred to as a \textit{refreshing} in the rest of this article. Therefore, for any vertex $u \in V$, the complexity of refreshing is $O(deg (u,\mathcal{G})\cdot k)$.


We present the pseudo-code of the baseline algorithm in Algorithm \ref{alg:base}. We decrease $k$ from $k_{max}$ to 1 in each iteration, where $k_{max}$ is the largest $k$ that enables the $k$-core of $\mathcal{G}$ to be non-empty. We calculate the $k$-core of $\mathcal{G}$ and compute $k$-$prob(u,\mathcal{G}')$ of each vertex $u \in V'$ (Lines 2-3). Next, we iteratively remove vertex $p_{crt}$, the current vertex with minimum $k$-probability, and add it into stack $S$. Then, we get $\eta$-$thres_k(p_{crt})$ and use DP to recompute the $k$-probabilities for all neighbors of $p_{crt}$ (Lines 5-10). 
Finally, we build the $\eta$-tree based on $S$ and their $\eta$-thresholds \cite{yang2019index}.


\begin{algorithm}[tb]
    \caption{Baseline}
    \label{alg:base}
    \textbf{Input}: An uncertain graph $\mathcal{G}(V,E,p)$\\
    \textbf{Output}: Correct UCF-Index of $\mathcal{G}$
    \begin{algorithmic}[1] 
        \FOR{ $k \leftarrow k_{max}$ to 1}
        \STATE $\mathcal{G}'(V',E',p) \leftarrow$ the $k$-core of $\mathcal{G}$
        \STATE Compute $k$-$prob(u,\mathcal{G}')$ for all the vertices $u \in V'$
        \STATE $S \leftarrow$ initialize an empty stack; \textit{curThres}$\leftarrow 0$
        \WHILE{$\mathcal{G}'$ is not empty}
        \STATE $p_{crt}\leftarrow \arg {{\min }_{u\in V'}}k$-$prob(u,\mathcal{G}')$
        \STATE \textit{curThres}$\leftarrow \max(k$-$prob(p_{crt},\mathcal{G}')$,\textit{curThres})
        \STATE $\eta$-$thres_k(p_{crt}) 
        \leftarrow$\textit{curThres}; $S$.push$(p_{crt})$
        \STATE $V' \leftarrow V' \backslash \{p_{crt}\}$
        \STATE Compute $k$-$probs$ using DP for all neighbors of {\it $p_{crt}$}
        \ENDWHILE
        \STATE Construct $\eta$-tree$_k$
        \ENDFOR
        \STATE \textbf{return} $\eta$-tree$_k$ for all $1\le k\le {{k}_{\max }}$
    \end{algorithmic}
\end{algorithm}

\noindent 
\paragraph{Optimization Road Map of Baseline.} We do not use floating-point division in the Baseline algorithm, which ensures the correctness of the UCF-Index. However, the $k$-probability of a vertex may be repeatedly calculated due to the deletion of its neighbors in peeling phase. Repeatedly calling a high-overhead DP process is inefficient. Moreover, in each iteration, the Baseline algorithm needs to initialize the \( k \)-probabilities of all vertices and search the entire uncertain graph to obtain the vertex with minimum $k$-probability, resulting in a high time cost. Therefore, we attempt to improve the Baseline from the following two aspects. Firstly, we propose bounds to estimate the $k$-probability of the vertices that need to be recalculated, and propose a lazy refreshing strategy. Secondly, vertices partitioning and progressive refinement strategy is devised to search for the vertex with the minimum $k$-probability. This reduces the initialization overhead for each $k$ and eliminates unnecessary refreshing.


\section{Effective UCF-Index Construction}
In this section, we propose an effective UCF-Index construction method. First, we develop a lazy refreshing strategy to reduce redundant recalculations. Then, the progressive refinement strategy is used to further accelerate index construction. Finally, we present the details of our \textit{OptiUCF} algorithm.

\subsection{Lazy Refreshing Strategy}
\label{subsec:optimal}
The general idea is to delay and minimize recalculation of $k$-probability based on tight bounds. We first introduce a basic vertices deletion strategy.


\begin{definition}[Associative Vertex]
    Given an uncertain graph $\mathcal{G}$, if the deletion of vertex $u$ causes $deg(v,\mathcal{G})<k$, then vertex $v$ is an associative vertex of $u$.
\end{definition}

In Figure \ref{fig:3n_core}, let $k=2$, since $deg(v_6,\mathcal{G})<k$ after deleting $v_5$, $v_6$ is an \textit{associative vertex} of $v_5$ and $\eta$-$thres_k(v_6)$ is equal to $\eta$-$thres_k(v_5)$. Thus, when removing $u$ from $\mathcal{G}$, we can delete all \textit{associative vertices} of $u$ without refreshing.

Next, we introduce two bounds to support the estimation of $k$-probability.

\noindent
\paragraph{The Beta-Function Based Bound.} Let  $d_u=|N_u(\mathcal{G})|$, $p_{max}(u)$ and $p_{min}(u)$ denote the minimum and maximum probability on the edges incident to $u$, respectively, i.e., $p_{min}(u)=\min(\{p_e|e\in E(u,\mathcal{G})\})$,\ $p_{max}(u)=\max(\{p_e|e\in E(u,\mathcal{G})\})$. Based on the inequalities proved by Bonchi et al \cite{bonchi2014core}, the upper bound and lower bound of $k$-$prob(u,\mathcal{G})$ for each vertex $u \in \mathcal{G}$ can be expressed as following equations :
\begin{equation}
\begin{array}{ll}
    LB(u)={{I}_{{{p}_{\min }}(u)}}(k,{{d}_{u}}-k+1)  \\
    UB(u)={{I}_{{{p}_{\max }}(u)}}(k,{{d}_{u}}-k+1)
\end{array}
\label{con:betafun}
\end{equation}
where ${{I}_{z}}(a,b)=\sum\nolimits_{i=a}^{a+b-1}\binom{a+b-1}{i}{{z}^{i}}{{(1-z)}^{a+b-1-i}}$ is the regularized beta function and can be computed in constant time using tables \cite{mckean1968tables}. Thus, we can compute the beta-function based bound in constant time.

Since there is usually a large difference between $p_{max}(u)$ and $p_{min}(u)$ in real-world uncertain graphs \cite{potamias2010k}, the beta-function based lower bound is not sufficiently tight. Next, we propose a more effective bound as follows.

\noindent
\paragraph{The Top-K Based Lower Bound.} Suppose the edges in $E(u,\mathcal{G})$ are sorted in non-increasing order based on the probabilities, i.e., ${{p}_{{{e}_{1}}}}\ge {{p}_{{{e}_{2}}}}\ge ...\ge {{p}_{{{e}_{{deg(u,\mathcal{G})}}}}}$. The set of top-k edges in $E(u,\mathcal{G})$ with the highest probabilities are denoted by $N_{u}^{k}(\mathcal{G})$, i.e., $N_{u}^{k}(\mathcal{G})=\{e_1,e_2,...,e_k\}$. Note that, $N_{u}^{k}(\mathcal{G})=\varnothing$ if $deg(u,\mathcal{G}) < k$. Below, we show the relationship between $k$-$prob(u,\mathcal{G})$ and $N_{u}^{k}(\mathcal{G})$.

\begin{lemma}
    Given an uncertain graph $\mathcal{G}$, an integer $k$, and a vertex $u \in \mathcal{G}$, we have $k$-$prob(u,\mathcal{G})$ $\ge \prod\nolimits_{e\in N_{u}^{k}(\mathcal{G})}{{{p}_{e}}}$.
\end{lemma}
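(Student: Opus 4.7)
The plan is to exhibit a simple sufficient event whose probability equals the claimed lower bound. Observe that if every edge in $N_u^k(\mathcal{G})$ is present in a possible world $G$, then $u$ has at least $k$ incident edges in $G$, so $\deg(u,G) \ge k$. Formally, let $A$ denote the event that all edges in $N_u^k(\mathcal{G})$ appear in the random realization, and let $B$ denote the event $\deg(u,\mathcal{G}) \ge k$. Then $A \subseteq B$ (as events over the possible worlds), so $\Pr(B) \ge \Pr(A)$.

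Next, I would compute $\Pr(A)$ using the standing independence assumption on edge existences stated in the Preliminary section: since the $k$ edges in $N_u^k(\mathcal{G})$ exist independently, we get
\begin{equation*}
\Pr(A) \;=\; \prod_{e \in N_u^k(\mathcal{G})} p_e.
\end{equation*}
Combining the two displays yields $k\text{-}prob(u,\mathcal{G}) = \Pr(B) \ge \prod_{e \in N_u^k(\mathcal{G})} p_e$, as desired. To make the set-theoretic containment $A \subseteq B$ fully rigorous, one can sum over possible worlds: partition $\mathcal{G}_u^{\ge k}$ according to which superset of $N_u^k(\mathcal{G})$ appears, and use Equation (\ref{equation:G}) to write $\Pr(B)$ as a sum of products of $p_e$ and $1-p_e$ factors, each of which contains $\prod_{e \in N_u^k(\mathcal{G})} p_e$ as a factor when we restrict to worlds in $A$.

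The edge case to address is $\deg(u,\mathcal{G}) < k$: here $N_u^k(\mathcal{G}) = \varnothing$ by the convention stated just before the lemma, so the empty product equals $1$ and the inequality would read $k\text{-}prob(u,\mathcal{G}) \ge 1$, which is false in general. I would therefore note explicitly that the lemma is understood under the implicit assumption $\deg(u,\mathcal{G}) \ge k$ (which is the only regime in which the top-$k$ set is defined to have size $k$ and in which the bound is used by the algorithm, since vertices of degree below $k$ are peeled immediately as associative vertices).

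There is no real obstacle here; the entire argument rests on recognizing the correct sufficient event and invoking edge independence. The only subtlety worth flagging in the write-up is that the bound is tight exactly when $\deg(u,\mathcal{G}) = k$, which justifies calling it a lower bound rather than an equality, and motivates why it can still be loose enough that the beta-function bound may sometimes be complementary.
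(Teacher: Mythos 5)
Your argument is essentially identical to the paper's: both exhibit the event that all edges of $N_u^k(\mathcal{G})$ are present as a sub-event of $\{\deg(u,\mathcal{G})\ge k\}$, compute its probability as $\prod_{e\in N_u^k(\mathcal{G})}p_e$ via edge independence (summing over possible worlds as in Equation (\ref{equation:G})), and conclude by monotonicity. The only divergence is the $\deg(u,\mathcal{G})<k$ edge case, where the paper simply declares the product over the empty set to be $0$ so the bound holds trivially, whereas you (more carefully) note that under the standard empty-product convention the inequality would fail and must be read as restricted to $\deg(u,\mathcal{G})\ge k$.
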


\begin{proof}
    If $deg(u,\mathcal{G})<k$, $k$-$prob(u,\mathcal{G})$ $= \prod\nolimits_{e\in N_{u}^{k}(\mathcal{G})}{{{p}_{e}}}=0$. Otherwise, based on the possible world semantics, we use $\mathcal{G}_T$ to represent the set of all possible worlds where each possible world $ G \in \mathcal{G}_T$ contains all edges in $N_{u}^{k}(\mathcal{G})$. Thus, we have $\mathcal{G}_T\subseteq$$\ \mathcal{G}_{u}^{\ge k}$. According to Equation (\ref{equation:G}), $\Pr(\mathcal{G}_T)=\sum_{G\in \mathcal{G}_T}\Pr(G)=\prod\nolimits_{e\in N_{u}^{k}(\mathcal{G})}{{{p}_{e}}}$. We can infer that $\Pr (deg (u,\mathcal{G})\ge k) = \sum_{G\in \mathcal{G}_{u}^{\ge k}}\Pr(G)\ge \sum_{G\in \mathcal{G}_T}\Pr(G) = \prod\nolimits_{e\in N_{u}^{k}(\mathcal{G})}{{{p}_{e}}}$
\end{proof}

Since the neighbors of each vertex are sorted in descending order based on edge probability in the preprocessing stage, we take $O(k)$ to compute the Top-K based lower bound.

Based on the aforementioned methods, the bounds of $k$-$prob(u,\mathcal{G})$ can be updated as follows:
\begin{equation}
\resizebox{0.91\linewidth}{!}{%
$
\begin{array}{ll}
    LB(u)=\max\{{{I}_{{{p}_{\min }}(u)}}(k,{{d}_{u}}-k+1) , \prod\nolimits_{e\in N_{u}^{k}(\mathcal{G})}{{{p}_{e}}}\}  \\
    UB(u)={{I}_{{{p}_{\max }}(u)}}(k,{{d}_{u}}-k+1)
\end{array}
$
}
\label{eq:bounds}
\end{equation}

Then, we use the above bounds to estimate the $k$-probability of the vertices that need to be updated and reduce unnecessary refreshing.


\begin{definition}[Indefinite/Definite vertex]

    Given an uncertain graph $\mathcal{G}$, a vertex $u$ is an indefinite vertex if it satisfies that there exists a neighbor of $u$ that has been deleted and the $k$-probability of $u$ has not been recomputed; otherwise, $u$ is a definite vertex.
\end{definition}

In the peeling stage, given an integer $k$, we can calculate the $\eta$-threshold for all vertices by repeating the following steps:
\begin{itemize}
    \item \textbf{Step 1.} We obtain $p_{crt}\in \mathcal{G}$ and then delete $p_{crt}$ and all \textit{associative vertices} of $p_{crt}$.
    \item \textbf{Step 2.} We use the bounds in Equation (\ref{eq:bounds}) to estimate the $k$-probabilities of the \textit{indefinite vertices}. As shown in Figure \ref{fig:p_crt}, if $UB(u) \le k$-$prob(p_{crt},\mathcal{G})$, the $k$-probability of $u$ does not need to be recalculated and the $\eta$-$thres_k(u)$ is set to be $\eta$-$thres_k(p_{crt})$. Otherwise, $u$ may need to be recalculated. We add $u$ to a minimum heap $D$, where the key of $u$ is $LB(u)$.
    \item \textbf{Step 3.} We obtain a \textit{definite vertex} $p_{nxt}$ with the minimum $k$-probability among all \textit{definite vertices}.
    \item \textbf{Step 4.}  We compare $LB(u)$ for each vertex $u \in D$ with $k$-$prob(p_{nxt},\mathcal{G})$ to determine the necessity to recalculate $u$. As shown in Figure \ref{fig:p_nxt}, if $LB(u)\ge k$-$prob(p_{nxt},\mathcal{G})$, we can delay recalculating the $k$-probability of $u$; otherwise, $u$'s $k$-probability needs to be recalculated and we add $u$ to a set $X$.
    \item \textbf{Step 5.} We compute the $k$-probabilities for all vertices in $X$ and select the vertex with minimum $k$-probability in $X\cup \{p_{nxt}\}$ as $p_{crt}$.
\end{itemize}

To further reduce unnecessary computation in \textbf{Step 5}, we directly compute the $k$-probability of the vertex $u$ that satisfies $LB(u)<k$-$prob(p_{nxt},\mathcal{G})$ in \textbf{Step 4}, and dynamically update $p_{nxt}$ to the vertex with the smaller $k$-probability between $u$ and $p_{nxt}$. As a result, we can further decrease the number of vertices that need to be recalculated.

\noindent
\paragraph{Lazy Refreshing Strategy.} We refer to the above strategy as the lazy refreshing strategy. We use an on-demand refreshing approach to reduce the number of recomputations based on estimating $k$-probability. In \textbf{Step 2}, we use the upper bound to remove vertices that do not need to be recalculated. In \textbf{Step 4}, we delay the refreshing of vertices based on the lower bound and dynamically adjust the $k$-probability of $p_{nxt}$ to further avoid unnecessary recalculations. Example \ref{example:impUCF} further illustrates the effectiveness of our lazy refreshing strategy.

\begin{figure}[tbp]
\centering
\subfigure[Avoid refreshing based on the upper bound]{
    \label{fig:p_crt} 
    \includegraphics[width=0.22\textwidth]{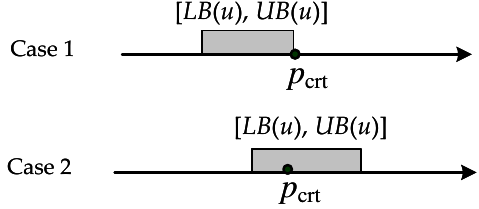}
}
\hfill
\subfigure[Delay refreshing based on the lower bound]{
    \label{fig:p_nxt} 
    \includegraphics[width=0.22\textwidth]{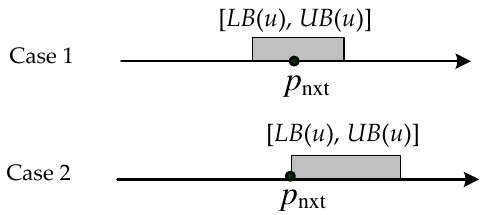}
}
\vspace{-1em}
\caption{Lazy refreshing strategy}
\vspace{-1em}
\label{fig:lazy refreshing}
\end{figure}

\begin{example}
    Considering the uncertain graph $\mathcal{G}$ in Figure \ref{fig:3n_core}, let $k=2$, we compute $k$-probabilities for $V\backslash \{v_{11}\}$ in initialization phase. In the next, we obtain the $p_{crt}=v_7$ in the first peeling. As $UB(v_3)=0.99954$, $UB(v_8)=0.36$, and $k$-$prob(v_7,\mathcal{G})=0.12$, $v_3,v_8$ are added into $D$. Then, we obtain $p_{nxt}=v_5$ and $k$-$prob(v_5,\mathcal{G})=0.28<LB(v_3)$ and $LB(v_8)$ in the second peeling. We obtain $p_{crt}=v_5$ and delete $v_5$ and its \textit{associative vertex} $v_6$. In the third peeling, the $p_{nxt}=v_{10}$ and $k$-$prob(v_{10},\mathcal{G})=0.35>LB(v_8)$. We recompute $k$-$prob(v_8,\mathcal{G})=0.30$ and obtain $p_{crt}=v_8$. Thus, $v_8$ and its associative vertices $v_9$ and $v_{10}$ are deleted. In the fourth peeling, we obtain $p_{nxt}=v_1$. As $k$-$prob(v_1,\mathcal{G})=0.80<LB(v_3)$, we obtain $p_{crt}=v_1$. Then, we delete $v_1$, and add $v_2,v_4$ into $D$ as $UB(v_2)=UB(v_4)=0.81$. Finally, $k\text{-}prob(v_2,\mathcal{G})=0.45$ and \textit{curThres}$=0.80$, we delete $v_2$ and its associative vertices $v_3$ and $v_4$. Compared to Baseline, the number of computations for $k$-$prob(v_3,\mathcal{G})$ is reduced from 4 to 0, which will significantly improve efficiency.
\label{example:impUCF}
\end{example}

\subsection{Progressive Refinement Strategy}
\label{subsec:floor}



To further improve the baseline algorithm, our strategy is to divide the vertices into different layers. Hence, the $k$-probabilities of vertices can be progressively refined layer by layer to search for the vertex with minimum $k$-probability.

We use $\Lambda_k=\{\eta$-$thres_k(u)|u\in (k,\eta)$-cores$\}$ to store all distinct $\eta$-thresholds and set $l=|\Lambda_k|$. Without loss of generality, $\Lambda_k=\{\eta_1,\eta_2,...,\eta_l\}$, and it is sorted in ascending order. We give the definition of partition of vertices:

\begin{definition}[Partition of Vertices]
    Given an uncertain graph $\mathcal{G}$ and three integers $k,i,l$, where $i \in [1,l]$, the partition of vertices in $\mathcal{G}$ is that $P(k,i)=\{u\in V|\eta$-$thres_{k+1}(u)=\eta_i\}$.
\label{definition:partition}
\end{definition}

The vertices partitioning framework is shown in Figure \ref{fig:vertex_partitioning}. Let $P(k)=\{P(k,1),P(k,2),...,P(k,l)\}$ and $PT(k)=V \backslash P(k)$, we can divide all vertices into $l+1$ layers based on their $\eta$-thresholds in the last iteration. Theorem \ref{theorem:qiantao} ensures the rationality of the vertices partitioning framework.


\begin{figure}[!tb]
    \centerline{\includegraphics[height=3.5cm]{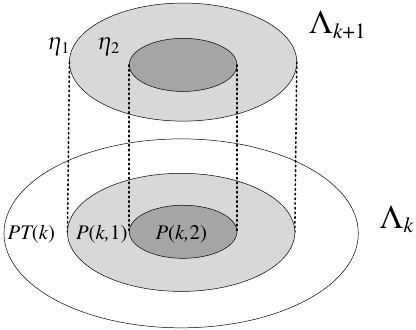}}
\caption{The vertices partitioning framework}
\label{fig:vertex_partitioning}
\vspace{-1em}
\end{figure}

\begin{theorem}
    Given an uncertain graph $\mathcal{G}$ and two parameter pairs of $(k_1,\eta_1)$ and $(k_2,\eta_2)$, where $k_1 \le k_2$ and $\eta_1 \le \eta_2$, then $(k_1,\eta_1)$-core and $(k_2,\eta_2)$-core are denoted by $H_1$ and $H_2$, respectively, and $H_2 \subseteq H_1$ holds.
\label{theorem:qiantao}
\end{theorem}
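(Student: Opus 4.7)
The claim is the standard monotonicity of $(k,\eta)$-cores: tightening either parameter can only shrink the structure. My plan is to show that $H_2$ itself, viewed as a candidate subgraph, already satisfies the defining condition of a $(k_1,\eta_1)$-core, and then invoke the maximality clause of Definition~\ref{definition: kn_core} to conclude $H_2\subseteq H_1$. Crucially, all probabilities in the argument are computed inside $H_2$, so I never need to relate $\deg(u,H_2)$ to $\deg(u,H_1)$ directly.

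\textbf{Step 1 — monotonicity in $k$ on $H_2$.} Pick any $u\in V(H_2)$. By definition of $H_2$, $\Pr(\deg(u,H_2)\ge k_2)\ge \eta_2$. Since $k_1\le k_2$, the event $\{\deg(u,H_2)\ge k_2\}$ is contained in $\{\deg(u,H_2)\ge k_1\}$, hence $\Pr(\deg(u,H_2)\ge k_1)\ge \Pr(\deg(u,H_2)\ge k_2)$. This is the one place the assumption $k_1\le k_2$ is used.

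\textbf{Step 2 — monotonicity in $\eta$, then maximality.} Chaining Step~1 with $\eta_2\ge \eta_1$ gives $\Pr(\deg(u,H_2)\ge k_1)\ge \eta_2\ge \eta_1$ for every $u\in V(H_2)$. Thus $H_2$ is a connected induced subgraph of $\mathcal{G}$ satisfying the degree-probability condition in Definition~\ref{definition: kn_core} for parameters $(k_1,\eta_1)$. Because $H_1$ is a maximal such subgraph, $H_2$ is contained in $H_1$, giving the desired inclusion.

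\textbf{Main obstacle.} The only subtlety I expect is that Definition~\ref{definition: kn_core} speaks of ``the'' $(k,\eta)$-core although a graph may, in general, admit several disjoint maximal $(k,\eta)$-subgraphs. I would handle this connected-component-wise: each connected component of $H_2$ is itself a connected induced subgraph meeting the $(k_1,\eta_1)$-condition, so by maximality each is contained in some maximal $(k_1,\eta_1)$-subgraph. Under the convention implicit in the UCF-Index discussion — identifying the $(k,\eta)$-core with the union of all vertices $u$ satisfying $\eta$-$thres_{k_1}(u)\ge \eta_1$ — this immediately yields $V(H_2)\subseteq V(H_1)$, and since both are induced subgraphs of the same $\mathcal{G}$, $E(H_2)\subseteq E(H_1)$, completing the proof.
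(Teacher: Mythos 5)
Your proposal is correct and follows essentially the same route as the paper's proof: show that every vertex of $H_2$ satisfies the $(k_1,\eta_1)$ constraint by combining the containment of the event $\{\deg(u,\cdot)\ge k_2\}$ in $\{\deg(u,\cdot)\ge k_1\}$ with $\eta_1\le\eta_2$, then invoke maximality of $H_1$. If anything, yours is slightly more careful than the paper's: you evaluate the degree probabilities inside $H_2$ as the definition of a $(k,\eta)$-core requires (the paper's proof writes $\Pr(\deg(u,\mathcal{G})\ge k)$ over the whole graph), and you explicitly flag the connectivity/maximality subtlety that the paper glosses over.
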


\begin{proof}
    For each $u \in H_2, \Pr(deg(u,\mathcal{G}) \ge k_2) \ge \eta_2$ based on Definition \ref{definition: kn_core}. As $k_1 \le k_2$ and $\eta_1 \le \eta_2, \Pr(deg(u,\mathcal{G}) \ge k_1) \ge \eta_2 \ge \eta_1$. As a result, each vertex of $H_2$ satisfies the constraint of $(k_1,\eta_1)$-core. Thus, $H_2 \subseteq H_1$ holds.
\label{lemma:yc}
\end{proof}

Given an integer $k$ and two probability thresholds $\eta_1,\eta_2$ with $\eta_1<\eta_2$. According to Theorem \ref{theorem:qiantao}, we have ($k+1,\eta_2$)-core $\subseteq$ ($k,\eta_2$)-core $\subseteq$ ($k,\eta_1$)-core. Then we can derive the following corollary, which gives us an \textbf{$\mathbf{\eta}$-threshold Based Lower Bound} to estimate the $k$-probability of a vertex.

\begin{corollary}
    Given an uncertain graph $\mathcal{G}$, an integer $k$, and a vertex $u \in \mathcal{G}$, we have $k$-$prob(u,\mathcal{G}) \ge \eta$-$thres_{k+1}(u)$.
\label{corollary:eta}
\end{corollary}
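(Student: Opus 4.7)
The plan is to chain two straightforward inequalities. First I would establish the easy fact that $k\text{-}prob(u,\mathcal{G}) \ge \eta\text{-}thres_k(u)$: setting $\eta^{\star} = \eta\text{-}thres_k(u)$, the definition of $\eta$-threshold produces a $(k,\eta^{\star})$-core $H$ with $u$ as one of its vertices, and the definition of $(k,\eta)$-core then gives $\Pr(deg(u,H) \ge k) \ge \eta^{\star}$. Because $H$ is an induced subgraph of $\mathcal{G}$, in every possible world the degree of $u$ in $H$ is at most its degree in $\mathcal{G}$, so the event $\{deg(u,H) \ge k\}$ is contained in $\{deg(u,\mathcal{G}) \ge k\}$; taking probabilities yields $k\text{-}prob(u,\mathcal{G}) \ge \Pr(deg(u,H) \ge k) \ge \eta^{\star}$.

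Second, I would invoke Theorem~\ref{theorem:qiantao} with $k_1 = k$, $k_2 = k+1$, and $\eta_1 = \eta_2 = \eta$, which says every $(k+1,\eta)$-core is contained in some $(k,\eta)$-core. Consequently, whenever some $(k+1,\eta)$-core contains $u$, so does some $(k,\eta)$-core, so the set of $\eta$ for which $u$ lies in a $(k+1,\eta)$-core is a subset of the set of $\eta$ for which $u$ lies in a $(k,\eta)$-core. Taking suprema gives $\eta\text{-}thres_k(u) \ge \eta\text{-}thres_{k+1}(u)$, and chaining with the first step produces $k\text{-}prob(u,\mathcal{G}) \ge \eta\text{-}thres_k(u) \ge \eta\text{-}thres_{k+1}(u)$, as required.

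The only mild obstacle I foresee is the attainment of the supremum defining $\eta\text{-}thres_{k+1}(u)$, i.e.\ guaranteeing the existence of an actual $(k+1,\eta^{\star})$-core rather than merely a sequence of approximating ones. I would handle this by noting that on a finite uncertain graph the peeling decomposition produces only finitely many distinct $\eta$-values, so the supremum is in fact a maximum realized by a concrete $(k+1,\eta^{\star})$-core; alternatively one can pass to an increasing sequence $\eta_n \uparrow \eta\text{-}thres_{k+1}(u)$, apply Theorem~\ref{theorem:qiantao} for each $\eta_n$, and take the limit. Either way, once this bookkeeping is in place, the corollary is essentially a two-line consequence of core nestedness combined with induced-subgraph degree monotonicity.
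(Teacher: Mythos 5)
Your proposal is correct and follows essentially the same route as the paper: the corollary is obtained from the core-nestedness of Theorem~\ref{theorem:qiantao} (giving $\eta$-$thres_k(u)\ge\eta$-$thres_{k+1}(u)$) combined with the observation that membership of $u$ in a $(k,\eta^{\star})$-core $H\subseteq\mathcal{G}$ forces $k$-$prob(u,\mathcal{G})\ge\Pr(deg(u,H)\ge k)\ge\eta^{\star}$ by degree monotonicity under induced subgraphs. Your extra care about attainment of the maximum in Definition~\ref{definition:eta} is a detail the paper glosses over but is handled correctly by finiteness.
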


Therefore, once we have completed the partitioning of the vertices, we can improve the algorithm using following strategy called \textbf{progressive refinement}.

\begin{itemize}
    \item In the initialization phase for each $k$, we compute the $k$-probability for each vertex only in the first layer, i.e., $PT(k)$. Then, in order to obtain $p_{crt}$, the vertex with minimum $k$-probability, we first identify the vertex with minimum $k$-probability in $PT(k)$. This vertex is then assigned to $p_{crt}$. If the $k$-probability of $p_{crt}$ is no greater than the corresponding $\eta$-threshold of its adjacent higher layer, then the vertex with minimum $k$-probability is already found. Hence, $k$-probability initialization in higher layers can be saved. Otherwise, we need to refine the vertices’ $k$-probability in the next layer to find $p_{crt}$ until the $\eta$-threshold corresponding to the next layer is no less than $p_{crt}$’s $k$-probability. The subsequent process of searching for $p_{crt}$ is also implemented in this way. Thus, the initialization overhead and subsequent search space for $p_{crt}$ is significantly reduced.
\end{itemize}

After the vertex with minimum $k$-probability, $p_{crt}$, is deleted, bound estimation of its neighbors in higher layers can be enhanced by the $\eta$-threshold based lower bound. Specifically, suppose $p_{crt}$ is located at $PT(k)$, and $u$ is a neighbor of $p_{crt}$ that is located at $P(k,1)$. Then the lower bound of $u$’s $k$-probability can be further laid by the corresponding $\eta$-threshold of $P(k,1)$. Therefore, unnecessary refreshing can be saved. 

Note that we can complete vertices partitioning at a very low cost through the following method. For each $k$, we use one bucket to store the vertices in each $P(k,i)$ for $i \in [1,l]$. In the previous iteration, when the $\eta$-threshold of a vertex $u$ is obtained, we store $u$ in the corresponding bucket according to $u$'s $\eta$-threshold. This allows obtaining any $P(k,i)$ in $O(1)$ time in the current iteration.

\begin{algorithm}[!tb]
    \caption{refKPROB}
    \label{alg:refkprob}
    \textbf{Input}: An uncertain graph $\mathcal{G}(V,E,p)$, an integer $k$, a vertex heap $D$, and a vertex $p_{nxt}$\\
    \textbf{Output}: $p_{crt}$
    \begin{algorithmic}[1] 
        \WHILE{$D$ \textit{is not empty}}
        \STATE $u \leftarrow \arg \min_{u\in D}LB(u)$
        \STATE \textbf{if} $LB(u)\ge k$-$prob(p_{nxt},\mathcal{G})$
        \textbf{then break}
        \STATE Compute $k$-$prob(u,\mathcal{G})$ using DP
        \STATE \textbf{if} $k$-$prob(u,\mathcal{G})<\ $$k$-$prob(p_{nxt},\mathcal{G})$ \textbf{then} $p_{nxt} \leftarrow u$
        \STATE $D\leftarrow D \backslash \{u\}$
        \ENDWHILE
        \STATE \textbf{return} $p_{nxt}$
    \end{algorithmic}
\end{algorithm}

\subsection{Sketch of OptiUCF}
\label{subsec:opt}


The sketch of our proposed \textit{OptiUCF} algorithm is illustrated in Algorithms \ref{alg:refkprob}-\ref{alg:optimal_floor}. Algorithm \ref{alg:refkprob} presents the details for implementing refreshing to obtain $p_{crt}$. We check each vertex $u \in D$ that has minimum $LB(u)$ one by one. If vertex $u$ satisfies $LB(u)< k$-$prob(p_{nxt},\mathcal{G})$, we compute $k$-$prob(u,\mathcal{G})$ and progressively update $p_{nxt}$ to the vertex with smaller $k$-probability between $u$ and $p_{nxt}$ (lines 4-6). Otherwise, the algorithm comes to its end (line 3) and returns $p_{crt}$ (line 7).



\begin{algorithm}[!tb]
    \caption{OptiUCF}
    \label{alg:optimal_floor}
    \textbf{Input}: An uncertain graph $\mathcal{G}(V,E,p)$\\
    \textbf{Output}: Correct UCF-Index of $\mathcal{G}$
    \begin{algorithmic}[1] 
        \STATE $\Lambda_{k_{max}+1} \leftarrow \varnothing$ 
        \FOR{ $k \leftarrow k_{max}$ \textbf{to} 1}
        \STATE $\mathcal{G}'(V',E',p) \leftarrow$ the $k$-core of $\mathcal{G}$
        \STATE Obtain $\Lambda_{k+1}=\{\eta_1,\eta_2,...,\eta_l\}$
        \STATE Partition all vertices of $V'$ into $P(k)$ and $PT(k)$
        \STATE $V_w \leftarrow PT(k)$; $i\leftarrow 0$ \ \ $//$$V_w$ is a working set
        \STATE Compute $k$-$prob(u,\mathcal{G}')$ for all the vertices $u \in V_w$
        \STATE $S \leftarrow$ initialize an empty stack; \textit{curThres}$\leftarrow 0$
        \STATE $D \leftarrow \varnothing$
        \WHILE{$\mathcal{G}'$ is not empty}
        \STATE $p_{crt}\leftarrow \arg {{\min }_{u\in V_w \backslash D}}k$-$prob(u,\mathcal{G}')$
        \IF{$\exists u \in D \ \text{s.t.} \ LB(u)<k\text{-}prob(p_{crt},\mathcal{G}') $}
        \STATE \textbf{Call} \textit{refKPROB}$(\mathcal{G}', k, D, p_{crt})$ to obtain $p_{crt}$
        \ENDIF
        \WHILE{$\eta_{i+1}<k\text{-}prob(p_{crt},\mathcal{G}')$ \textbf{and} $i<l$}
        \STATE $i \leftarrow i+1$
        \STATE \textbf{Call} \textit{addLayer($V_w,\mathcal{G}',p_{crt},P(k),i$)} 
        \ENDWHILE
        \STATE \textit{curThres}$\leftarrow \max(k$-$prob(p_{crt},\mathcal{G}')$,\textit{curThres})
        \STATE $\eta$-$thres_k(p_{crt}) 
        \leftarrow$\textit{curThres}; $S$.push$(p_{crt})$
        \STATE $V_w \leftarrow V_w \backslash \{p_{crt}\}$; Remove $p_{crt}$ from $\mathcal{G}'$
        \FOR{ each $u \in N_{p_{crt}}(\mathcal{G}') \cap V_w$}
        \IF{$deg(u,\mathcal{G}')<k$ \textbf{or} $UB(u)\le$ \textit{curThres}} 
        \STATE $k$-$prob(u,\mathcal{G}') \leftarrow 0$
        \ELSE
        \STATE Add $u$ to $D$ \ \     $//$$UB(u)>$ \textit{curThres}
        \ENDIF
        \ENDFOR
        \IF{$V_w=\varnothing$ \textbf{and} $i<l$}
        \STATE $i \leftarrow i+1$
        \STATE \textbf{Call} \textit{addLayer}($V_w,\mathcal{G}',p_{crt},P(k),i$)
        \ENDIF
        \ENDWHILE
        \STATE Construct $\eta$-tree$_k$
        \ENDFOR
        \STATE \textbf{return} $\eta$-tree$_k$ for all $1\le k\le {{k}_{\max }}$
        \STATE \mbox{\textbf{Procedure} \textit{addLayer}($V_w,\mathcal{G}',p_{crt},P(k),i$)}
        \STATE \hspace{0.8em} $V_w\leftarrow V_w \cup P(k,i)$
        \hspace{1em}  
        \STATE \hspace{0.8em} \textbf{for} each $u\in P(k,i)$ \textbf{do}
        \STATE \hspace{1.8em} Compute $k$-$prob(u,\mathcal{G}')$ using DP
        \STATE \hspace{2.03em}\textbf{if} $k$-$prob(u,\mathcal{G}')<k$-$prob(p_{crt},\mathcal{G}')$ \textbf {then} $p_{crt}\leftarrow u$
    \end{algorithmic}
\end{algorithm}

The main program of \textit{OptiUCF} is presented in Algorithm \ref{alg:optimal_floor}. In the initialization phase (lines 3-9), vertices partitioning is implemented based on Definition \ref{definition:partition} and the $k$-probabilities of vertices in the outermost layer are computed. A minimum heap $D$ is set to store \textit{indefinite vertices}, and $V_w$ is a working set that is initialized by vertices in the outermost layer. In line 11, only the vertex in $V_w$ but not in $D$ is visited in search for the vertex with minimum $k$-probability, while \textit{indefinite vertices} in $D$ will be visited by delayed refreshing. If $V_w \backslash D$ is empty, a virtual vertex with $k$-probability larger than $1$ is returned. If there exists an \textit{indefinite vertex} in $D$ whose lower bound of $k$-probability is less than $p_{crt}$, delayed refreshing is triggered (lines 12-13). If the minimum $k$-probability found in the working set is larger than the corresponding $\eta$-threshold of the next layer, vertices in the next layer will be added into $V_w$, and the global minimum vertex will be searched over a larger region (lines 14-16). After deletion of $p_{crt}$ (line 19), its neighbors will be processed in lines 20-24.  Only the neighbor in the working set will be processed and its bound is laid in line 21. On the contrary, the neighbor outside the working set will not be processed. Because it must be located at a higher layer, its lower bound can be given by the corresponding $\eta$-threshold of its layer, and it will be processed by procedure \textit{addLayer} when necessary. It should be noted that in line 24, if $u$ has already been added in $D$, $u$ will be removed from $D$ first. Procedure \textit{addLayer} implements search space expansion and progressive refinement to find the global minimum vertex (lines 30-34).

\begin{theorem}
    Given an uncertain graph $\mathcal{G}(V,E,p)$, the time complexity of Algorithm \ref{alg:optimal_floor} is $O({{k}_{\max }}|V|^{2}\log |V|+\sum\nolimits_{u\in V}{k_{\max }^{2}\cdot deg {{(u,\mathcal{G})}^{2}})}+k_{\max }^{2}|E|)$.
\end{theorem}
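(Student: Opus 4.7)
The plan is to fix an outer-loop iteration with parameter $k$, bound the cost contributed by that iteration, and then sum over $k=1,\dots,k_{\max}$. The three terms in the stated complexity will correspond to, respectively, heap and linear-scan operations used to find minima, DP refreshes, and incremental bound maintenance.

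Within a single $k$-iteration I would break the work into four buckets. The initialization (lines 3-9) runs in $O(k|E|)$: computing the $k$-core and bucket-based partition is linear, and initializing $k\text{-}prob$ for the outermost layer $PT(k)$ via DP costs $O(deg(u)\cdot k)$ per vertex. The selection of $p_{crt}$ on line 11 by a linear scan of $V_w\setminus D$ is $O(|V|)$ per iteration of the inner while loop, and the loop executes at most $|V|$ times, contributing $O(|V|^2)$. The heap maintenance on $D$ together with the bound (re)computations from Equation (\ref{eq:bounds}) contributes $O(|E|(k+\log|V|))$: each bound takes $O(k)$ (constant-time beta term plus $O(k)$ top-$k$ term), each heap insertion or deletion takes $O(\log|V|)$, and each deleted edge triggers at most one such operation on each endpoint. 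Finally, each DP refresh of a vertex $u$ costs $O(deg(u)\cdot k)$ by Equation (\ref{con:dp}).

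The central combinatorial claim I would establish is that within a single $k$-iteration every vertex $u$ is refreshed at most $deg(u,\mathcal{G})$ times. A refresh happens only when $u$ is drawn from the indefinite heap $D$ inside \textit{refKPROB}, or when $u$ is first pulled into the working set inside \textit{addLayer}. In either case the refresh transitions $u$ from indefinite to definite, and $u$ can only become indefinite again after another neighbor of $u$ is deleted. Charging each refresh to the neighbor-deletion that most recently made $u$ indefinite establishes the $deg(u)$ bound, so the total refresh cost per $k$-iteration is $\sum_{u\in V} k\cdot deg(u)^2$.

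Combining the four buckets yields per-$k$ cost $O(|V|^2 + |E|(k+\log|V|) + k\sum_u deg(u)^2)$, and summing with $\sum_{k=1}^{k_{\max}}k=O(k_{\max}^2)$, together with the trivial bound $|V|^2\le|V|^2\log|V|$, gives the claimed $O(k_{\max}|V|^2\log|V| + k_{\max}^2\sum_{u\in V}deg(u,\mathcal{G})^2 + k_{\max}^2|E|)$. The main obstacle is the charging argument for refreshes: the lazy-refreshing logic interleaves bound-based skips, enqueues into $D$, and later DP calls triggered by \textit{refKPROB} or \textit{addLayer}, so a careful case analysis across these three code paths is needed to confirm that no neighbor deletion is charged for two refreshes of the same vertex within one $k$-iteration.
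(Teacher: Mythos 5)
Your proposal is correct and follows essentially the same decomposition as the paper: per-$k$ initialization in $O(k|E|)$, min-finding and heap maintenance absorbed into $O(|V|^2\log|V|)$, and DP refreshes charged to neighbor deletions to obtain $O(k\sum_{u\in V}deg(u,\mathcal{G})^2)$ per iteration, then summed over $k$. Your charging argument for the refresh count (each refresh of $u$ paid for by the neighbor deletion that made $u$ indefinite, plus one initial refresh from \textit{addLayer}) is in fact spelled out more carefully than the paper's one-line statement that every neighbor of $p_{crt}$ is recomputed, but it establishes the same bound.
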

    
\begin{proof}
    We calculate the $k$-core requires $O(|V|\text{+}|E|)$ time \cite{batagelj2003m}. In initialization phase, the time complexity of line 7 is $O(\sum\nolimits_{u\in V}{k\cdot deg (u,\mathcal{G}))}=O(k_{\max }|E|)$. In each peeling of lines 10-27, we use a Fibonacci heap to maintain all vertices in $V_w$. The Fibonacci heap takes a constant amount of time for insertion and updating if the key is decreased. In the worst-case scenario, we need to updated the $k$-probabilities of all vertices in $D$ during each peeling and totally take $O(|V|^2\log |V|)$ time to maintain $D$ and $V_w$ in peeling stage. Meanwhile, the $P(k)$ is added to $\mathcal{G}$ during obtaining $p_{crt}$ in the first peeling and every neighbor of $p_{crt}$ is recomputed. We need to take $ O(\sum\nolimits_{(u,v)\in E}{k\cdot deg (v,\mathcal{G})})=$$O(\sum\nolimits_{u\in V}{k\cdot deg {{(u,\mathcal{G})}^{2}})}$ to recompute the $k$-probability. The complexity of $\eta$-tree construction algorithm is $O(|E|)$ \cite{yang2019index}. As a result, the time complexity of Algorithm \ref{alg:optimal_floor} is $O({{k}_{\max }}|V|^{2}\log |V|+\sum\nolimits_{u\in V}{k_{\max }^{2}\cdot deg {{(u,\mathcal{G})}^{2}})}+k_{\max }^{2}|E|)$.
\end{proof}

\begin{table}[t]
    \centering
    \begin{tabular}{lcccc}
        \hline
        Datasets  & $|V|$ & $|E|$ & $d_{max}$ & $k_{max}$\\
        \hline
        Flickr     & 24,125     & 300,836  & 546 &225\\
        DBLP     & 684,911     & 2,284,991  & 611 &114\\
        Biomine  & 1,008,201    & 6,722,503  & 139,624 & 448\\
        WebGoogle & 875,713     & 5,105,039 & 6,332 & 44\\
        Youtube & 1,134,890     & 2,987,624  &28,754 &51\\
        WikiTalk & 2,394,385     & 5,021,410  &100,029 & 131\\
        Stackof & 2,601,977     & 63,497,050  &44,065 & 198\\
        LiveJournal & 4,847,571     & 68,993,773  &14,815 & 360\\
        \hline
    \end{tabular}
    \caption{Dataset statistics}
    \label{tab:plain}
\vspace{-1em}
\end{table}

\section{Experimental Evaluation}
\label{Section:experimental}
In this section, we conducted extensive experiments to evaluate the efficiency and scalability of the proposed algorithm.

\noindent
\paragraph{Experimental Setup.} We used three algorithms \textit{BC}, \textit{OP}, and \textit{OP*} to construct the correct UCF-Index. \textit{BC} denoted our basic algorithm in Section \ref{subsec:bc}, \textit{OP} denoted our \textit{OptiUCF} that only uses lazy refreshing strategy, and \textit{OP*} denoted the full version of \textit{OptiUCF}. These algorithms were implemented in C++ and compiled using g++ at the -O3 optimization level. All the experiments were conducted on a Linux server with Intel Xeon 2.50GHz CPU and 64GB RAM.

\noindent
\paragraph{Datasets.} We use eight real-world graphs to conduct the experiments. The summary of datasets is displayed in Table \ref{tab:plain}, where $d_{max}$ denotes the maximum degree and $k_{max}$ denotes the largest $k$-number. The edge probabilities in the first three datasets are derived from real-world applications \cite{potamias2010k}. Flickr is an online photo-sharing community, DBLP is a collaboration network, and Biomine is a biological interaction network. The details of the remaining five datasets can be found in SNAP \cite{snapnets}, and the edge probabilities are uniformly sampled from the interval [0,1].

\subsection{Efficiency}
Figure \ref{fig:total_time} shows the running time of \textit{BC}, \textit{OP}, and \textit{OP*} on each dataset. We can see that \textit{OP*} is the fastest across all datasets and outperforms \textit{BC} by 1 to 2 orders of magnitude. The effectiveness of our optimization strategies is validated. Compared to \textit{OP}, \textit{OP*} shows an improvement of 5 to 10 times across all datasets due to the effect of the progressive refinement technique. \textit{OP} performs better than \textit{BC}, which demonstrates the effectiveness of lazy refreshing strategy.


Three lower bounds, namely $\eta$-threshold based, Top-K based, and Beta-Function based lower bound are proposed in this paper. The $\eta$-threshold based lower bound is adopted to support progressive refinement. We hereby verify the impact of the rest two lower bounds on the overall performance of our algorithm. During the experiment, only one lower bound is used while the other is masked, and the results are shown in Figure \ref{fig:lowerbound_time}. \textit{OP*-Topk} indicates using single Top-K based lower bound in \textit{OptiUCF}, and \textit{OP*-Beta} means using single Beta-Function based lower bound. We observe that \textit{OP*-Topk} is 10\% to 25\% faster than \textit{OP*-Beta}, which demonstrates that Beta-Function based lower bound is less tight than Top-K based lower bound.


\subsection{Accuracy}

\begin{figure}[tb]
    \centerline{\includegraphics[width=0.4\textwidth]{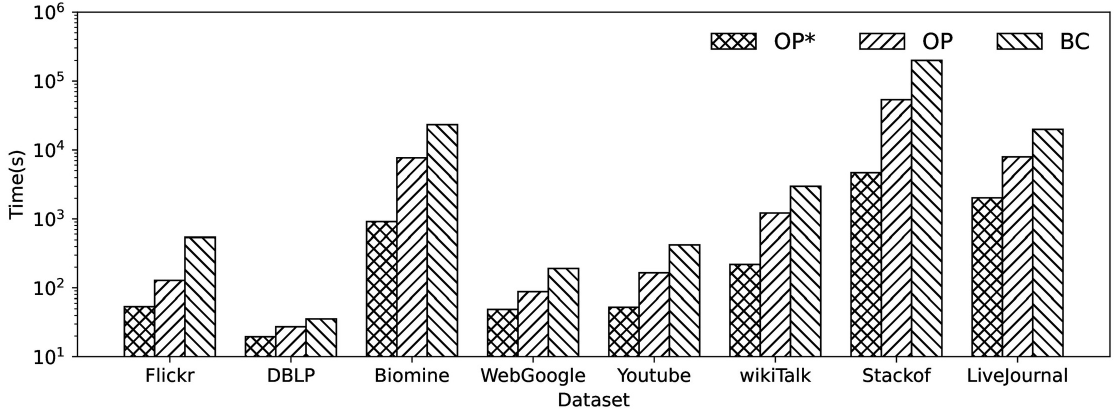}}
    \vspace{-0.225em}
\caption{Time cost for UCF-Index construction}
\label{fig:total_time}
\vspace{-0.285em}
\end{figure}

\begin{figure}[!t]
    \centerline{\includegraphics[width=0.4\textwidth]{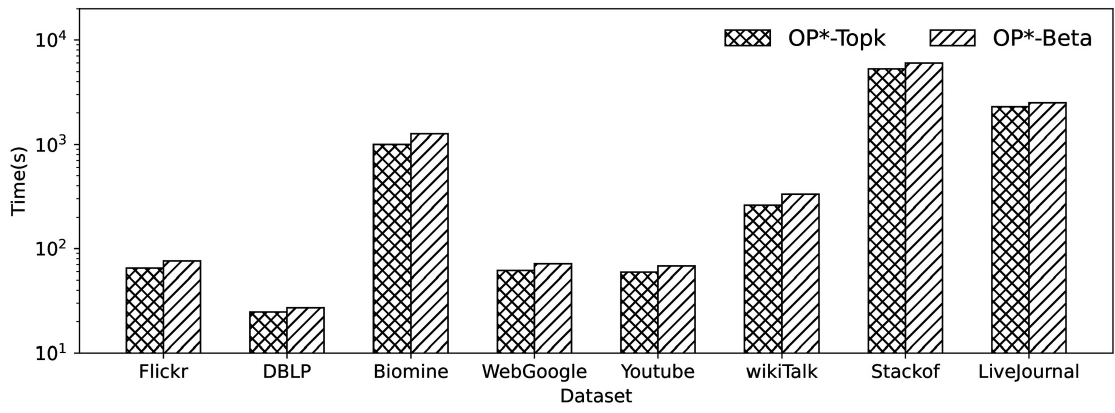}}
    \vspace{-0.235em}
\caption{Time cost for UCF-Index with different lower bounds}
\label{fig:lowerbound_time}
\vspace{-1em}
\end{figure}

The UCF-Index construction algorithm denoted as \textit{EC} \cite{yang2019index} uses floating-point division to update $k$-probabilities, which will result in significant errors in the calculated $\eta$-thresholds. The proportion of vertices with incorrectly calculated $\eta$-threshold on Biomine and LiveJournal is shown in Figure \ref{accuracy}, and the results on the other graphs show consistent trends. We can clearly see that as $k$ increases, the error ratio increases sharply and approaches 100\% due to the increase in the number of divisions used. These results indicate that the \textit{EC} algorithm is extremely inaccurate for $\eta$-thresholds, hence the correctness of the index so constructed cannot be guaranteed.

\subsection{Scalability}
In this experiment, we evaluate the scalability of our proposed algorithms using two real-world graphs, Biomine and LiveJournal. Similar trend can be found in the other graphs. The number of vertices randomly sampled from the original graph is varying from 20\% to 100\%. 

The experimental results are shown in Figure \ref{scalable}. As the graph size increases, the running time of our algorithm increases smoothly, and the gap between \textit{OP*} and \textit{BC} gradually widens. These indicate that our optimization techniques are of good scalability and are more efficient on large graphs.

\section{Related Works}
Dense subgraph mining on uncertain graphs has been extensively studied \cite{zhang2024finding,cheng2021efficient,qiao2021maximal,dai2022fast}. Bonchi et al. first proposed the $k$-core search problem on uncertain graphs, namely $(k,\eta)$-core \cite{bonchi2014core}. Li et al. designed a new dynamic programming algorithm to accelerate the computation of $\eta$-degree \cite{li2019improved}. Dai et al. addressed the issues arising from floating-point number division calculation errors during the $\eta$-degree update process \cite{dai2021core}. Yang et al. introduced a forest-based index structure called UCF-Index, which can support query of $(k,\eta)$-cores in an optimal time for any input parameters $k$ and $\eta$ \cite{yang2019index}. However, the UCF-Index still faces a issue with floating-point calculation errors.

Sun et al. designed a CPT-Index to support online queries for $(k,\gamma)$-truss and proposed a top-down index construction algorithm \cite{sun2021efficient}. Xing et al. solved the problem of floating-point division errors in the computation of $(k,\gamma)$-truss and proposed an accurate UTF-Index to support the search of the connected $(k,\gamma)$-truss community \cite{xing2024truss}. These studies on $(k,\gamma)$-truss inspire our work. Nevertheless, the key distinction is that $(k,\gamma)$-truss focuses on the edge support probability, while $(k,\eta)$-core focuses on vertex's $\eta$-degree. Therefore, our work is clearly different from theirs.


\begin{figure}[!t]
\subfigcapskip=0pt
\centering
\subfigure[Biomine]{
    \label{error_rate_Biomine} 
    \includegraphics[width=0.222\textwidth]{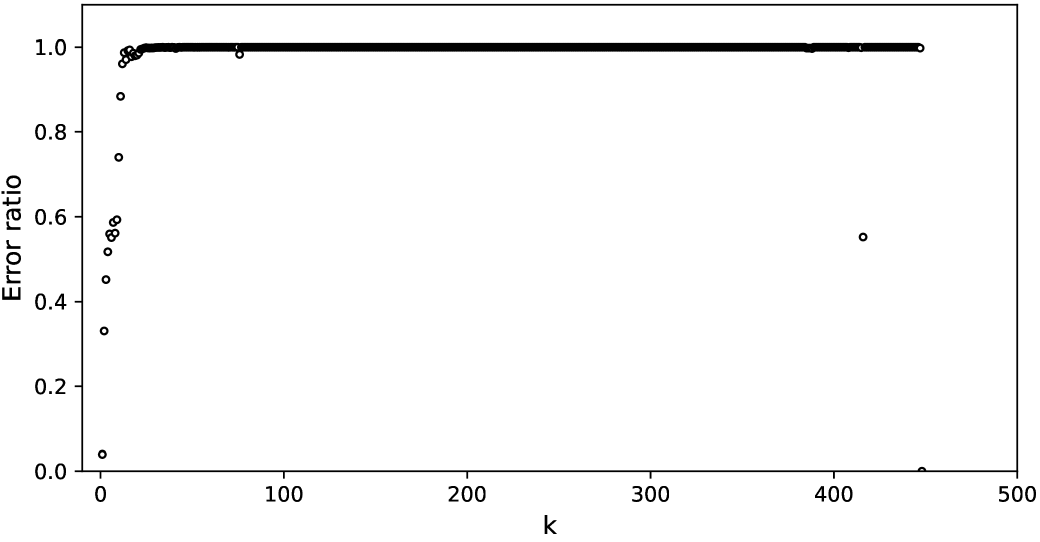}
}
\subfigure[LiveJournal]{
    \label{error_rate_LiveJournal} 
    \includegraphics[width=0.222\textwidth]{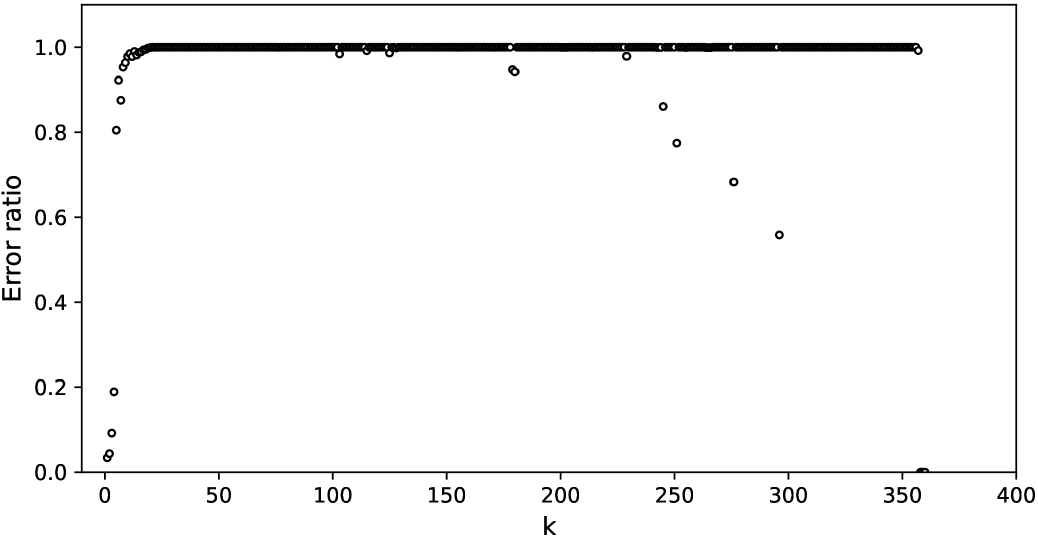}
}
\vspace{-0.65em}
\caption{The error ratio of $\eta$-threshold computed by \textit{EC}}
\label{accuracy}
\vspace{-1em}
\end{figure}

\begin{figure}[!t]
\subfigcapskip=0pt
\centering
\subfigure[Biomine]{
    \label{scalable_Biomine} 
    \includegraphics[width=0.222\textwidth]{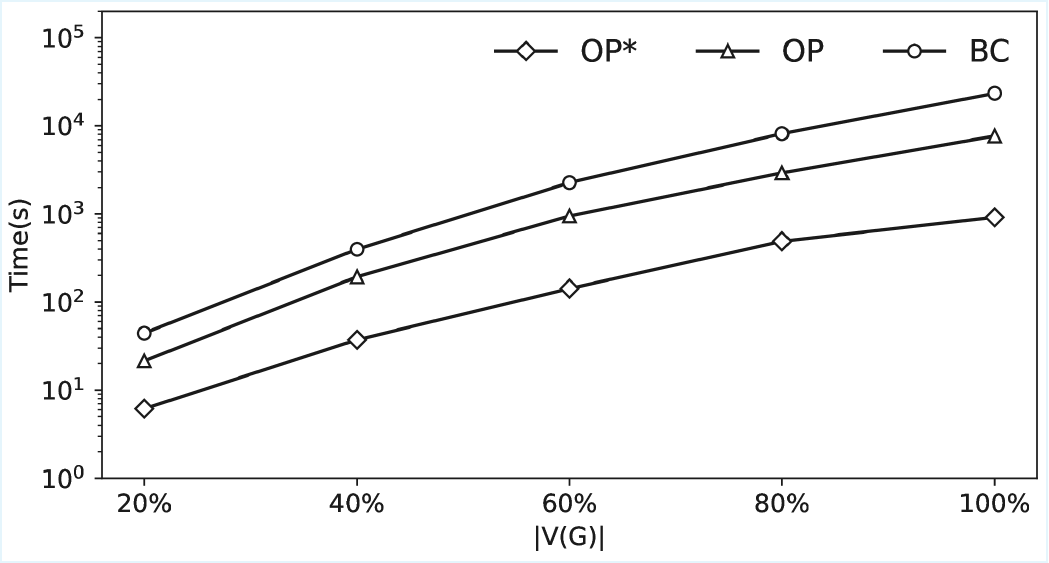}
}
\subfigure[LiveJournal]{
    \label{scalable_LiveJournal} 
    \includegraphics[width=0.222\textwidth]{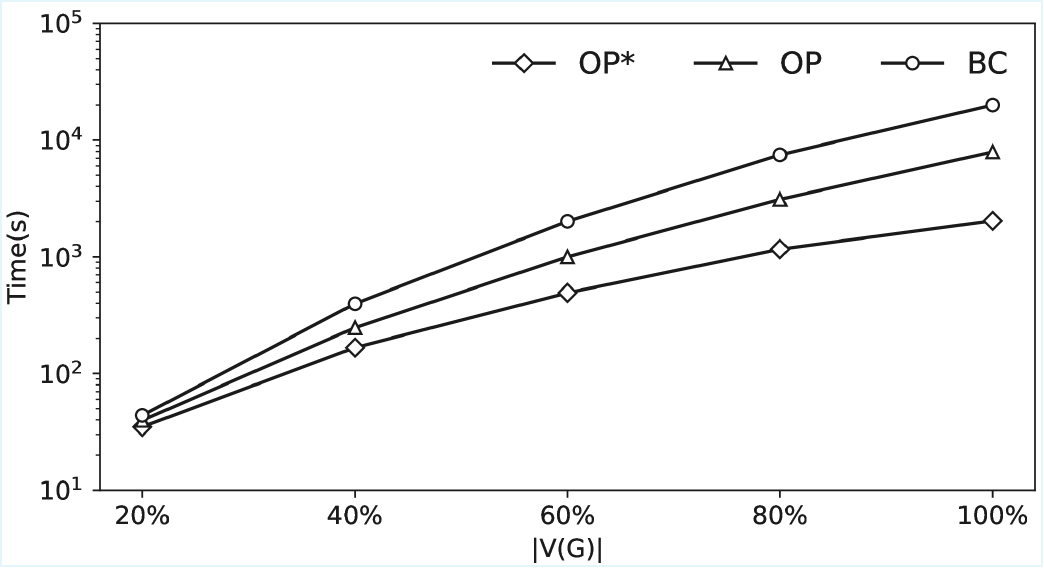}
}
\vspace{-0.65em}
\caption{Scalability of UCF-Index construction}
\label{scalable}
\vspace{-0.875em}
\end{figure}

\section{Conclusion}
In this paper, we propose an efficient algorithm to build a correct UCF-Index. Our basic strategy is to recalculate the $k$-probabilities that need to be updated. We propose a lazy refreshing strategy to minimize the number of recalculations, and develop a progressive refinement to reduce the overhead to search for vertex with minimum $k$-probability. Our proposed algorithm outperforms the baseline algorithm by one to two orders of magnitude.


\bibliographystyle{named}
\bibliography{ijcai25}

\end{document}